\title{Capacity Bounds for a Class of Diamond Networks}
\author{
  \IEEEauthorblockN{Shirin Saeedi Bidokhti}
  \IEEEauthorblockA{
  Lehrstuhl f\"{u}r Nachrichtentechnik\\
    Technische Universit\"{a}t M\"{u}nchen, Germany\\
    shirin.saeedi@tum.de} 
  \and
  \IEEEauthorblockN{Gerhard Kramer}
  \IEEEauthorblockA{
  Lehrstuhl f\"{u}r Nachrichtentechnik\\
    Technische Universit\"{a}t M\"{u}nchen, Germany\\
    gerhard.kramer@tum.de}
}
\date{}
\newtheorem{theorem}{Theorem}
\newtheorem{lemma}{Lemma}
\newtheorem{proposition}{Proposition}
\newtheorem{corollary}{Corollary}
\newtheorem{remark}{Remark}
\begin{document}

\maketitle
\begin{abstract}
A class of diamond networks are studied where  the broadcast component is modelled by two independent bit-pipes. New upper and low bounds are derived on the capacity which improve previous bounds. The upper bound is in the form of a max-min problem, where the maximization is over a coding distribution and the minimization is over an auxiliary channel. The proof technique generalizes bounding techniques of Ozarow for the  Gaussian multiple description problem (1981), and Kang and Liu for the Gaussian diamond network  (2011). The bounds are evaluated for a Gaussian multiple access channel (MAC) and the binary adder MAC, and the capacity is found for interesting ranges of the bit-pipe capacities.
\end{abstract}
\section{Introduction}

The diamond network was introduced in \cite{Schein01} as a simple multi-hop network with broadcast and multiple access channel (MAC) components. The  two-relay diamond network models the scenario where a source communicates with a sink through two relay nodes that do not have information of their own to communicate. The underlying challenge may be described as follows. In order to fully utilize the MAC to the receiver, we would ideally like full cooperation between the relay nodes. On the other hand, in order to communicate the maximum amount of information and better use the diversity that is offered by the relays, we would like to send independent information to the relay nodes over the broadcast channel. The  problem of finding the capacity of this network is unresolved. Lower and upper bounds  on the capacity are given in \cite{Schein01}. We remark that the problem is solved over linear deterministic relay networks, and the capacity of Gaussian relay networks has been approximated within a constant number of bits \cite{AvestimehrDiggaviTse11}.

In this paper, we study a class of diamond networks where the broadcast channel is modelled by two independent bit-pipes.
This problem was initially studied in \cite{TraskovKramer07} where lower and upper bounds were derived on the capacity. The best known upper bound for the problem is the cut-set bound, which does not fully capture the tradeoff between cooperation and diversity. Recently, \cite{KangLiu11} studied the network with a Gaussian MAC and proved a new upper bound which constrains the mutual information between the MAC inputs to improve the cut-set bound. The bounding technique in \cite{KangLiu11} is motivated by \cite{Ozarow80} that treats the Gaussian multiple description problem. Unfortunately, neither result seems to apply to discrete memoryless channels. 

This paper is organized as follows. We state the problem setup in Section \ref{prel}. In Section \ref{secupp}, we prove a new upper bound on the achievable rates by generalizing the bounding technique of \cite{KangLiu11}. Our upper bound applies to the general class of discrete memoryless MACs, and strictly improves the cut-set bound. In Section \ref{secach}, we improve the achievable rates of \cite{TraskovKramer07} by communicating a common piece of information from the source to both relays using superposition coding and Marton's coding. Finally, we study our bounds for networks with a Gaussian MAC (Section \ref{exGauss}) and a binary adder MAC (Section \ref{exAdd}). For both examples, we find conditions on the bit-pipe capacities such that the upper and lower bounds meet.
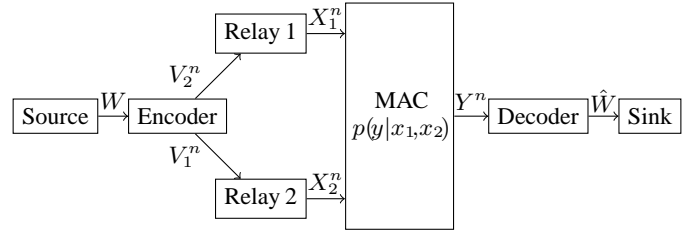
\begin{figure}[t!]
\centering
\begin{tikzpicture}[scale=.73]
\tikzstyle{every node}=[draw,shape=circle,font=\small];

\path (-1.2,1.5) node[shape=rectangle] (s0) {$\text{Source}$};
\path (1,1.5) node[shape=rectangle] (s) {$\text{Encoder}$};

\path (2.5,3) node[shape=rectangle] (r2) {$\text{Relay\hspace{-.05cm} 1}$};
\path (2.5,0) node[shape=rectangle] (r1) {$\text{Relay\hspace{-.05cm} 2}$};
\path (5,1.5) node[shape=rectangle,minimum height=3cm] (mac) {$\hspace{-.2cm}\begin{array}{c}\text{MAC}\\p(\hspace{-.05cm}y|x_1\hspace{-.05cm},\hspace{-.05cm}x_2\hspace{-.05cm})\end{array}\hspace{-.25cm}$};

\path (7.5,1.5) node[shape=rectangle] (d) {$\text{Decoder}$};
\path (9.51,1.5) node[shape=rectangle] (ds) {$\text{Sink}$};

\path (4.25,0) node[draw=none] (inp1) {};
\path (4.25,3) node[draw=none] (inp2) {};
\path (5.75,1.5) node[draw=none] (out) {};

\draw[->] (s0) --node[draw=none,yshift=.2cm]{$W$} (s);  
\draw[->] (d) --node[draw=none,yshift=.2cm]{$\hat{W}$} (ds);  
\draw[->] (s) --node[left,draw=none]{$V^n_1$} (r1);
\draw[->] (s) --node[left,draw=none]{$V^n_2$} (r2);
\draw[->] (r2) --node[draw=none,yshift=.2cm]{$X^n_1$} (inp2);
\draw[->] (r1) --node[draw=none,yshift=.2cm]{$X^n_2$} (inp1);
\draw[->] (out) --node[draw=none,yshift=.2cm]{$Y^n$} (d);

\end{tikzpicture}
\caption{Problem setup}
\label{two-user-mac}
\end{figure}

\section{Preliminaries}
\label{prel}
\subsection{Notation}
We use standard notation for random variables, e.g., $X$, probabilities, e.g., $p_X(x)$ or $p(x)$, entropies, e.g., $H(X)$ and $H(X|Y)$, and mutual information, e.g., $I(X;Y)$. We denote the sequence $X_1,\ldots,X_n$ by $X^n$. Sets are denoted by script letters and matrices are denoted by bold capital letters.
\subsection{Model}
Consider the diamond network in Fig. \ref{two-user-mac}.  A source communicates a message of rate $R$ to a sink.  The source is connected  to two relays via noiseless bit-pipes of capacities $C_1$ and $C_2$,  and the relays communicate with the receiver over a MAC.

The source encodes a message $W$ with $nR$ bits into sequence $V_1^n$, which is available at encoder $1$, and sequence $V_2^n$, which is available at encoder $2$. $V_1^n$ and $V_2^n$ are such that $H(V_1^n)\leq nC_1$ and $H(V_2^n)\leq nC_2$.
Each relay $i$, $i=1,2$,  maps its received sequence $V_i^n$ into a sequence $X_i^n$ which is sent over the MAC.
The MAC is characterized by its input alphabets $\mathcal{X}_1$ and $\mathcal{X}_2$,  output alphabet $\mathcal{Y}$, and transition probabilities $p(y|x_1,x_2)$, for each $x_1\in\mathcal{X}_1$, $x_2\in\mathcal{X}_2$.
From the received sequence, the sink decodes an estimate $\hat{W}$ of $W$.
We are interested in characterizing the highest rate $R$ that permits arbitrarily small positive error probability $\Pr(\hat{W}\neq W)$. 


\section{An upperbound}
\label{secupp}
The idea behind our upper bound is motivated by  \cite{KangLiu11,Ozarow80}. The proposed bound is applicable not only to Gaussian channels, but also to general discrete memoryless channels, and it strictly improves the cut-set bound as we show via two examples.

In the network of Fig. \ref{two-user-mac}, the cut-set bound gives
\begin{eqnarray}
R&\leq&C_1+C_2\label{disregard}\\
R&\leq&C_1+I(X_2;Y|X_1)\\
R&\leq&C_2+I(X_1;Y|X_2)\\
R&\leq&I(X_1X_2;Y).
\end{eqnarray}

However, the bound in \eqref{disregard} disregards the potential correlation between the inputs of the MAC.
A tighter bound on $R$ is given in the following multi-letter form:
\begin{eqnarray*}
nR&\leq& H(V_1^n,V_2^n)\\&=& H(V_1^n)+H(V_1^n)-I(V_1^n;V_2^n)\\&\leq&nC_1+nC_2-I(X_1^n;X_2^n).
\end{eqnarray*}
It is observed in \cite{KangLiu11} that $I(X_1^n;X_2^n)$ can be written in the following form for  \textit{any} random sequence $U^n$.
\begin{eqnarray*}
I(X_1^n;X_2^n)&=& I(X_1^nX_2^n;U^n)-I(X_1^n;U^n|X_2^n)\\&&-I(X_2^n;U^n|X_1^n)+I(X_1^n;X_2^n|U^n)
\end{eqnarray*}
Therefore, we have
\begin{eqnarray}
nR&\leq&nC_1+nC_2-I(X_1^nX_2^n;U^n)\nonumber\\&&+I(X_1^n;U^n|X_2^n)+I(X_2^n;U^n|X_1^n).\label{eqref}
\end{eqnarray}

It may not be clear how to single-letterize the above bound. We proceed as follows.
First, note that
\begin{eqnarray}
nR&\leq&I(X_1^nX_2^n;Y^n)\leq I(X_1^nX_2^n;Y^nU^n).\label{eq4}
\end{eqnarray}
Combining inequalities  \eqref{eqref} and \eqref{eq4}, we have
\begin{eqnarray}
2nR&\leq&nC_1+nC_2+I(X_1^nX_2^n;Y^n|U^n)\nonumber\\&&+I(X_1^n;U^n|X_2^n)+I(X_2^n;U^n|X_1^n).
\end{eqnarray}

Define $U_i$ from $X_{1i}, X_{2i}, Y_i$ through the channels $p_{U_i|X_{1i} X_{2i} Y_i}(u_i|x_{1i},x_{2i},y_i)$, $i=1,2,\ldots,n$. With this choice of $U_i$ we have the following chain of inequalities:
\allowdisplaybreaks
\begin{eqnarray*}
&&2nR\\&&\leq nC_1+nC_2+I(X_1^nX_2^n;Y^n|U^n)\\&&\quad+I(X_1^n;U^n|X_2^n)+I(X_2^n;U^n|X_1^n)\\
&&\leq nC_1+nC_2+\sum_{i=1}^nI(X_1^nX_2^n;Y_i|U^nY^{i-1})\\&&\quad+\sum_{i=1}^nI(X_1^n;U_i|X_2^nU^{i-1}\!)\!+\!\sum_{i=1}^nI(X_2^n;U_i|X_1^nU^{i-1}\!)\\
&&\stackrel{(a)}{\leq} nC_1+nC_2+\sum_{i=1}^nI(X_{1i}X_{2i};Y_i|U_i)\\&&\quad+\sum_{i=1}^nI(X_{1i};U_i|X_{2i})+\sum_{i=1}^nI(X_{2i};U_i|X_{1i})\\
&&\leq nC_1+nC_2+nI(X_{1,Q},X_{2,Q};Y_Q|U_Q)\\&&\quad+nI(X_{1,Q};U_Q|X_{2,Q})+nI(X_{2,Q};U_Q|X_{1,Q}).
\end{eqnarray*}
Step $(a)$ follows because we have the following two Markov chains:
\begin{eqnarray}
&&(X_1^nX_2^nU^nY^{i-1})-(X_{1i}X_{2i}U_i)-Y_i\\
&&(X_1^nX_2^nU^{i-1})-(X_{1i}X_{2i})-U_i.
\end{eqnarray}

We thus have the following upper bound.
\begin{theorem}
\label{thmupp}
The rate $R$ is achievable only if  there exists a joint distribution $p(x_1,x_2)$ for which inequalities \eqref{thmupp1}-\eqref{thmupp5} hold for every auxiliary channel $p(u|x_1,x_2,y)$.
\begin{eqnarray}
R&\leq& C_1+C_2\label{thmupp1}\\
R&\leq& C_2+I(X_1;Y|X_2)\label{thmupp2}\\
R&\leq& C_1+I(X_2;Y|X_1)\label{thmupp3}\\
R&\leq&I(X_1X_2;Y)\label{thmupp4}\\
2R&\leq&C_1+C_2+I(X_{1}X_{2};Y|U)\nonumber\\&&+I(X_{1};U|X_{2})+I(X_{2};U|X_{1})\label{thmupp5}
\end{eqnarray}
\end{theorem}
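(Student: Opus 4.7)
The plan is to derive the five inequalities from standard multi-letter manipulations combined with Fano's inequality, the bit-pipe constraints $H(V_i^n)\le nC_i$, and the Markov chain $V_i^n\to X_i^n\to Y^n$ enforced by the coding structure. The cut-set-type inequalities (\ref{thmupp1})--(\ref{thmupp4}) are not treated in the excerpt and must be established separately, but each follows by the usual cut-set recipe: (\ref{thmupp1}) is immediate from $nR\le H(V_1^n,V_2^n)\le H(V_1^n)+H(V_2^n)$; (\ref{thmupp2}) and (\ref{thmupp3}) follow from $nR\le H(V_i^n)+I(V_{3-i}^n;Y^n|V_i^n)$ combined with the data-processing step $I(V_1^n;Y^n|V_2^n)\le I(X_1^n;Y^n|X_2^n)$, valid because $X_j^n$ is a function of $V_j^n$; and (\ref{thmupp4}) is the standard receiver-side cut-set bound. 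Each is single-letterized via the chain rule, memorylessness of the MAC, and the usual uniform time-sharing variable.

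For the new inequality (\ref{thmupp5}) I follow the derivation already displayed in the excerpt. The two ingredients are $nR\le n(C_1+C_2)-I(X_1^n;X_2^n)$, obtained from $H(V_1^n,V_2^n)=H(V_1^n)+H(V_2^n)-I(V_1^n;V_2^n)$ together with $I(V_1^n;V_2^n)\ge I(X_1^n;X_2^n)$ by data processing, and $nR\le I(X_1^n X_2^n;Y^n U^n)$ for any auxiliary $U^n$, from Fano's inequality after enlarging the observation. Summing the two, applying the four-term identity expanding $I(X_1^n;X_2^n)$ through $U^n$, and taking $U_i$ to be the output of a memoryless auxiliary kernel $p(u|x_1,x_2,y)$ fed by $(X_{1i},X_{2i},Y_i)$ yields the chain that terminates at step $(a)$. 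Step $(a)$ itself uses the two stated Markov chains: the first localizes $Y_i$ to $(X_{1i},X_{2i},U_i)$ by memorylessness of the MAC, and the second localizes $U_i$ to $(X_{1i},X_{2i})$ because each $U_j$ depends only on its own coordinate.

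The final single-letterization proceeds by time-sharing. Introduce $Q\sim\mathrm{Unif}\{1,\ldots,n\}$ independent of all source and channel variables, and identify $(X_1,X_2,Y):=(X_{1,Q},X_{2,Q},Y_Q)$. The MAC kernel $p(y|x_1,x_2)$ is preserved, and the joint law of $(X_{1,Q},X_{2,Q})$ serves as the witness distribution $p(x_1,x_2)$ in the theorem. The main obstacle is the bookkeeping around $Q$: one has $\tfrac{1}{n}\sum_i I(X_{1i}X_{2i};Y_i|U_i)=I(X_{1,Q}X_{2,Q};Y_Q|U_Q,Q)$, so a spurious conditioning on $Q$ appears that is not present in (\ref{thmupp5}). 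The cleanest remedy is to absorb $Q$ into the auxiliary by redefining $U:=(U_Q,Q)$; the conditional of $U$ given $(X_1,X_2,Y)$ is still a legitimate $p(u|x_1,x_2,y)$, one has $I(X_1 X_2;Y|U)=\tfrac{1}{n}\sum_i I(X_{1i}X_{2i};Y_i|U_i)$ exactly, and $I(X_1;U|X_2)\ge I(X_{1,Q};U_Q|X_{2,Q},Q)$ with the symmetric inequality for $I(X_2;U|X_1)$ by the chain rule. Since the memoryless kernel used to define $U_i$ was arbitrary, the bound (\ref{thmupp5}) holds simultaneously for every auxiliary channel $p(u|x_1,x_2,y)$, completing the proof.
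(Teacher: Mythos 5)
Your overall route is the paper's: the Kang--Liu four-term expansion of $I(X_1^n;X_2^n)$, the two multi-letter bounds $nR\le nC_1+nC_2-I(X_1^n;X_2^n)$ and $nR\le I(X_1^nX_2^n;Y^nU^n)$, the memoryless auxiliary kernel $p_{U_i|X_{1i}X_{2i}Y_i}$, and the two Markov chains at step $(a)$. The cut-set inequalities are treated as in the paper, which simply invokes the standard argument. The one place where your argument genuinely diverges --- and where it breaks --- is the handling of the time-sharing variable $Q$.

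The theorem quantifies universally over the auxiliary channel: for the single witness $p(x_1,x_2)$, inequality \eqref{thmupp5} must hold for \emph{every} kernel $p(u|x_1,x_2,y)$. Your remedy of redefining $U:=(U_Q,Q)$ changes the auxiliary channel: the conditional law of $(U_Q,Q)$ given $(X_{1,Q},X_{2,Q},Y_Q)$ is not the kernel $W$ you started from but $W$ augmented by the code-dependent, correlated time index. What you establish is therefore \eqref{thmupp5} for every kernel in the image of the map $W\mapsto\widetilde{W}$, a strict subclass of all kernels; and since the bound is not monotone under enlarging $U$ (passing from $U_Q$ to $(U_Q,Q)$ weakly increases $I(X_1;U|X_2)$ and $I(X_2;U|X_1)$ while $I(X_1X_2;Y|U)$ can move either way), the inequality for $(U_Q,Q)$ does not imply the one for $U_Q$. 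Your closing sentence, that the bound ``holds simultaneously for every auxiliary channel,'' is thus a non sequitur. The fix is to keep $U=U_Q$: because the MAC and the auxiliary kernel are applied identically at every time instant, $Q-(X_{1,Q},X_{2,Q})-(Y_Q,U_Q)$ is a Markov chain, hence $I(Q;Y_Q|X_{1,Q}X_{2,Q}U_Q)=0$ and $I(Q;U_Q|X_{1,Q}X_{2,Q})=0$, which give $I(X_{1,Q}X_{2,Q};Y_Q|U_Q Q)\le I(X_{1,Q}X_{2,Q};Y_Q|U_Q)$ and $I(X_{1,Q};U_Q|X_{2,Q}Q)\le I(X_{1,Q};U_Q|X_{2,Q})$, with the symmetric statement for the third term. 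With this choice $U_Q$ has exactly the prescribed conditional law $p(u|x_1,x_2,y)$ given $(X_{1,Q},X_{2,Q},Y_Q)$ --- this is what the paper's final ``$\le$'' is silently doing --- and the universal quantifier over auxiliary channels survives.
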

\begin{remark}
The bound of Theorem \ref{thmupp} is in the form of a max-min problem, where the maximization is over $p(x_1,x_2)$ and the minimization is over $p(u|x_1,x_2,y)$. \end{remark}
\begin{remark}
\label{concave}
For a fixed auxiliary channel $p(u|x_1,x_2,y)$ and a fixed MAC $p(y|x_1,x_2)$, the upper bound in Theorem \ref{thmupp} is concave in $p(x_1,x_2)$. See \cite{reportMAC}.
\end{remark}
\begin{remark}
The right hand side (RHS) of \eqref{thmupp5} may be written as
\begin{eqnarray}
\label{equivthm5}
C_1+C_2+I(X_{1}X_{2};YU)-I(X_{1};X_2)+I(X_{1};X_2|U).
\end{eqnarray}
\end{remark}
We study the upper bound of Theorem \ref{thmupp} with a Gaussian MAC and a binary adder MAC in Sections \ref{exGauss} and \ref{exAdd} respectively. 

\section{A lower-bound}
\label{secach}
The achievable scheme that we present here is based on \cite{TraskovKramer07}, but we further allow a common message at both relaying nodes. This is  done by rate splitting, superposition coding and Marton's coding and is summarized in the following theorem\footnote{ The method to derive this theorem was discussed in general terms by G. Kramer and S. Shamai after G. Kramer presented the paper \cite{TraskovKramer07} at ITW in 2007.}.
\begin{theorem}
\label{lower bound}
The rate $R$ is achievable if it satisfies the following conditions for some pmf $p(u,x_1,x_2,y)$, where $p(u,x_1,x_2,y)=p(u,x_1,x_2) p(y|x_1,x_2)$, and $U\in\mathcal{U}$ with $|\mathcal{U}|\leq \min\{|\mathcal{X}_1||\mathcal{X}_2|+2,|\mathcal{Y}|+4\}$.
\begin{eqnarray}
\label{inner}
R\!\leq\! \min\! \left\{\!\!\!\!\begin{array}{l}C_1+C_2-I(X_1;X_2|U)\\C_2+I(X_1;Y|X_2U)\\C_1+I(X_2;Y|X_1U)\\\frac{1}{2}(C_1\!+\!C_2\!+\!I(X_1X_2;Y|U)\!-\!I(X_1;X_2|U)\hspace{-.05cm})\\I(X_1X_2;Y)\end{array}\!\!\!\!\right\}
\end{eqnarray}
\end{theorem}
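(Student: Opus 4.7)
The plan is to generalize the Traskov--Kramer scheme \cite{TraskovKramer07} by introducing a common message via superposition. First, I would split $W$ into three independent submessages $(W_0,W_1,W_2)$ with rates $R_0,R_1,R_2\ge 0$ summing to $R$: $W_0$ is a common message destined for both relays, and $W_i$ is private to relay $i$. Construct the codebook by superposition plus Marton coding: generate $2^{nR_0}$ cloud centers $U^n(w_0)$ i.i.d.\ $\sim p(u)$; for each $U^n(w_0)$, generate $2^{n\tilde R_i}$ satellite codewords $X_i^n(w_0,\ell_i)$ i.i.d.\ $\sim p(x_i|u)$, $i=1,2$, and bin them into $2^{nR_i}$ bins. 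Given $(W_0,W_1,W_2)$, use Marton covering to pick $(\ell_1,\ell_2)$ in the specified bins such that $(U^n,X_1^n,X_2^n)$ is jointly $p(u,x_1,x_2)$-typical; the mutual covering lemma ensures success with high probability once $(\tilde R_1-R_1)+(\tilde R_2-R_2)>I(X_1;X_2|U)$. Forward $(W_0,W_1,\ell_1)$ to relay~1 and $(W_0,W_2,\ell_2)$ to relay~2, which requires $R_0+\tilde R_i\le C_i$ for $i=1,2$, and let each relay transmit $X_i^n$ over the MAC.

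Next, I would analyze joint-typicality decoding at the sink, which searches for a unique message triple consistent with some bin-respecting pair $(\hat\ell_1,\hat\ell_2)$ making $(U^n,X_1^n,X_2^n,Y^n)$ jointly $p(u,x_1,x_2,y)$-typical. The key subtlety is that ``wrong'' codeword tuples are drawn from the product kernel $p(x_1|u)p(x_2|u)$ while joint typicality is tested under the correlated $p(x_1,x_2|u)$; the Sanov cost of this mismatch, $I(X_1;X_2|U)$, contributes an extra slack in each MAC-type bound. A union-bound over the four error events (wrong $\ell_1$ only, wrong $\ell_2$ only, both wrong, wrong $w_0$) gives
\begin{align*}
\tilde R_1 &\le I(X_1;Y|X_2,U)+I(X_1;X_2|U),\\
\tilde R_2 &\le I(X_2;Y|X_1,U)+I(X_1;X_2|U),\\
\tilde R_1+\tilde R_2 &\le I(X_1X_2;Y|U)+I(X_1;X_2|U),\\
R_0+\tilde R_1+\tilde R_2 &\le I(X_1X_2;Y)+I(X_1;X_2|U).
\end{align*}

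Finally, I would eliminate $R_0,R_1,R_2,\tilde R_1,\tilde R_2$ by Fourier--Motzkin using these constraints plus the bit-pipe bounds and the Marton covering inequality. Summing the two bit-pipe constraints and using the Marton covering produces the first bound of \eqref{inner}; pairing one bit-pipe with the corresponding single-user MAC bound produces the second and third; adding both bit-pipe bounds to the sum MAC bound and invoking the Marton covering (together with the projection $R_0\ge 0$) produces the half-sum bound~(4), which is the novel bound enabled by the common message $W_0$; and the wrong-$w_0$ MAC bound together with the Marton covering gives $R\le I(X_1X_2;Y)$. The cardinality bound $|\mathcal U|\le\min\{|\mathcal X_1||\mathcal X_2|+2,|\mathcal Y|+4\}$ follows by a standard support-lemma argument applied to the information quantities in \eqref{inner}. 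The hard part will be the refined error analysis at the sink: without the $I(X_1;X_2|U)$ slack, Fourier--Motzkin produces a strictly smaller region with $-I(X_1;X_2|U)$ penalties on every bound, so carefully tracking the product-versus-correlated distributional mismatch is essential.
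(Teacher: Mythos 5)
Your proposal is correct and follows essentially the same route as the paper: rate-splitting into a common message (superposed via $U$) and two private messages, Marton coding to impose the conditional correlation at cost $I(X_1;X_2|U)$, joint-typicality decoding with the $+I(X_1;X_2|U)$ slack in the MAC-type bounds, and Fourier--Motzkin elimination. The only cosmetic difference is that you phrase Marton coding via explicit binning and the mutual covering lemma, whereas the paper directly selects $2^{nR'}$ jointly typical codeword pairs per cloud center with $R'=R_1+R_2-I(X_1;X_2|U)-\delta_\epsilon$; the two formulations yield the same constraint set.
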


\begin{proof}[Sketch of proof]
The inner bound in Theorem \ref{lower bound} is found using the following achievable scheme.
\paragraph{Codebook construction}
Fix the joint pmf $P_{UX_1X_2}(u,x_1,x_2)$. Let $R=R_{12}+R^\prime$ and let   \begin{eqnarray}\label{ratesplit}R^\prime=R_1+R_2-I(X_1;X_2|U)-\delta_\epsilon.\end{eqnarray}
Generate $2^{nR_{12}}$ sequences $u^n(m_{12})$  independently, each in an i.i.d manner according to $\prod_lP_{U}(u_l)$.
For each sequence $u^n(m_{12})$, generate (i) $2^{nR_1}$ sequences $x^n_1(m_{12},m_1)$ (conditionally) independently, each in an i.i.d manner according to $\prod_lP_{X_1|U}(x_{1,l}|u_l)$ and (ii) $2^{nR_2}$ sequences $x^n_2(m_{12},m_2)$ (conditionally) independently, each in an i.i.d manner according to $\prod_lP_{X_2|U}(x_{2,l}|u_l)$.
For each sequence $u^n(m_{12})$, pick $2^{nR^\prime}$ sequence pairs $(x^n_1(m_{12},m_1),x^n_2(m_{12},m_2))$ that are jointly typical. Index such pairs by $m^\prime\in\{1,\ldots,2^{nR^\prime}\}$.

\paragraph{Encoding}
To communicate message $W=(m_{12},m^\prime)$, communicate $(m_{12},m_1)$ with relay $1$  and $(m_{12},m_2)$ with relay $2$ (we assume $(x^n_1(m_{12},m_1),x^n_2(m_{12},m_2))$ is the ${m^\prime}\text{th}$ jointly typical pair for $m_{12}$). Relays $1$ and $2$ then send  $x^n_1(m_{12},m_1)$ and $x^n_2(m_{12},m_2)$ over the MAC, respectively.
\paragraph{Decoding}
Upon receiving $y^n$, the receiver looks for $(\hat{m}_{12},\hat{m}^\prime)$ for which the following tuple is jointly typical: $$(u^n(\hat{m}_{12}),x^n_1(\hat{m}_{12},\hat{m}_1),x^n_2(\hat{m}_{12},\hat{m}_2),y^n)\in\mathcal{T}^n_\epsilon.$$
\paragraph{Error Analysis}
The error analysis is straightforward and we omit it for the sake of brevity. The above scheme performs reliably if
\begin{eqnarray}
&& R_{12}+R_1< C_1\label{C1}\\
&& R_{12}+R_2< C_2\label{C2}\\
&& R< I(UX_1X_2;Y)\label{R}\\
&& R_1+R_2-I(X_1;X_2|U)<I(X_1X_2;Y|U)\label{R1+R2}\\
&&R_2< I(X_2;Y|X_1,U)+I(X_1;X_2|U)\label{R2}\\
&&R_1< I(X_1;Y|X_2,U)+I(X_1;X_2|U).\label{R1}
\end{eqnarray}
Conditions \eqref{ratesplit}-\eqref{R1}, together with $R_1\geq 0$, $R_2\geq 0$, $R_{12}\geq 0$, characterize an achievable rate. Eliminating $R_{12},R_1,R_2$, we arrive at Theorem \ref{lower bound}. Cardinality bounds are derived using the standard method via Caratheodory's theorem \cite{ElGamalKim}.
\end{proof}
\begin{proposition}
\label{thmconcave}
The lower bound of Theorem \ref{lower bound} is concave in $C_1,C_2$.
\end{proposition}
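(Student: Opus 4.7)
The plan is a textbook time-sharing argument. Let $R^\star(C_1,C_2)$ denote the largest $R$ satisfying \eqref{inner} for some $p(u,x_1,x_2)$; I aim to show $R^\star(\lambda C_1^{(1)}+\bar\lambda C_1^{(2)},\lambda C_2^{(1)}+\bar\lambda C_2^{(2)})\ge\lambda R^\star(C_1^{(1)},C_2^{(1)})+\bar\lambda R^\star(C_1^{(2)},C_2^{(2)})$ for any $\lambda\in[0,1]$, $\bar\lambda=1-\lambda$, and any two capacity pairs. First I would take distributions $p_k(u,x_1,x_2)$ achieving $R_k=R^\star(C_1^{(k)},C_2^{(k)})$, introduce a binary time-sharing variable $Q\in\{1,2\}$ with $\Pr(Q=1)=\lambda$, and form a joint law by drawing $(U,X_1,X_2)\sim p_Q$ conditional on $Q$ and passing $(X_1,X_2)$ through the MAC to produce $Y$. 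I would then plug the augmented auxiliary $U'=(U,Q)$ into \eqref{inner} and verify that $\bar R=\lambda R_1+\bar\lambda R_2$ satisfies all five inequalities at the averaged capacities $\bar C_i=\lambda C_i^{(1)}+\bar\lambda C_i^{(2)}$.

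The first four lines of \eqref{inner} fall out directly: each conditional mutual information appearing there splits linearly under the further conditioning, for instance $I(X_1;X_2|U')=I(X_1;X_2|U,Q)=\sum_k\lambda_k I_k(X_1;X_2|U)$, and similarly for $I(X_1X_2;Y|U)$ and $I(X_j;Y|X_kU)$. Substituting into the right-hand sides of these four bounds produces the convex combination of the original right-hand sides evaluated under $p_1$ and $p_2$, each of which dominates $R_k$, so each bound is at least $\bar R$.

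The one place this is not immediate, and which I expect to be the main obstacle, is the unconditional bound $R\le I(X_1X_2;Y)$ in the last line of \eqref{inner}. Time-sharing naturally yields $I(X_1X_2;Y|Q)=\sum_k\lambda_k I_k(X_1X_2;Y)\ge\bar R$, but the theorem asks for the unconditional mutual information under the averaged marginal $\bar p(x_1,x_2)=\sum_k\lambda_k p_k(x_1,x_2)$. To close this gap I would invoke the Markov chain $Q-(X_1,X_2)-Y$ implied by the channel law $p(y|x_1,x_2)$: expanding $I(QX_1X_2;Y)$ in two ways gives $I(X_1X_2;Y|Q)=I(X_1X_2;Y)-I(Q;Y)\le I(X_1X_2;Y)$, so the fifth constraint is also satisfied. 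Cardinality is not a concern here, since the concavity claim is about the value of $R^\star$; the alphabet bound of Theorem \ref{lower bound} can be re-enforced afterwards via Carath\'eodory, leaving $R^\star$ unchanged.
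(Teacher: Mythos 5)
Your proposal is correct and takes essentially the same route as the paper: both introduce a time-sharing variable $Q$, absorb it into the auxiliary as $U'=(U,Q)$, split the conditional mutual informations linearly, and handle the unconditioned term $I(X_1X_2;Y)$ via the Markov chain $Q-(X_1,X_2)-Y$. The only differences are presentational --- the paper argues by contradiction and writes out only the case $C_1=C_2=C$, whereas you give the direct inequality for general $(C_1,C_2)$.
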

\begin{proof}
We prove the statement for the case of $C_1=C_2=C$. The same argument holds for the more general case.
We express the lower bound of Theorem \ref{lower bound} in terms of the maximization problem $\max_{p(u,x_1,x_2)}f^p_\ell(C,p(u,x_1,x_2))$ and we denote it by $f_\ell(C)$. In this formulation, $f^p_\ell(C,p(u,x_1,x_2))$ is just the minimum term on the right hand of \eqref{inner}.

The proof of the theorem is by contradiction. Suppose that the lower bound is not concave in $C$; i.e., there exist  values $C^{(1)},C^{(2)}$, and $\alpha$, $0\leq\alpha\leq 1$, such that $C^\star=\alpha C^{(1)}+(1-\alpha)C^{(2)}$ and $f_\ell(C^\star)<\alpha f_\ell(C^{(1)})+(1-\alpha)f_\ell(C^{(2)})$. 
Let $p^{(1)}(u,x_1,x_2)$ (resp. $p^{(2)}(u,x_1,x_2)$) be  the probability distribution that maximizes $f^p_\ell(C^{(1)},p(u,x_1,x_2))$ (resp. $f^p_\ell(C^{(2)},p(u,x_1,x_2))$). Let $p_Q(1)=\alpha$ and $p_Q(2)=1-\alpha$, and define $p_{UX_1X_2|Q}(u,x_1,x_2|1)=p^{(1)}(u,x_1,x_2)$ and $p_{UX_1X_2|Q}(u,x_1,x_2|2)=p^{(2)}(u,x_1,x_2)$. Then we have the following chain of inequalities.
\begin{eqnarray*}
&&f_\ell(C^\star)\\
&&<\alpha f_\ell(C^{(1)})+(1-\alpha)f_\ell(C^{(2)})\\
&&\leq\min\! \left\{\!\!\!\begin{array}{l}2C^\star-I(X_1;X_2|UQ)\\C^\star+I(X_1;Y|X_2UQ)\\C^\star+I(X_2;Y|X_1UQ)\\\frac{1}{2}(2C^\star\!+\!I(X_1X_2;Y|UQ)\!-\!I(X_1;X_2|UQ))\\I(UX_1X_2;Y|Q)\end{array}\!\!\!\!\!\right\}\\
&&\leq\min\! \left\{\!\!\!\begin{array}{l}2C^\star-I(X_1;X_2|UQ)\\C^\star+I(X_1;Y|X_2UQ)\\C^\star+I(X_2;Y|X_1UQ)\\\frac{1}{2}(2C^\star\!+\!I(X_1X_2;Y|UQ)\!-\!I(X_1;X_2|UQ))\\I(UX_1X_2;Y)\end{array}\!\!\!\!\!\right\}\\
&&\stackrel{(a)}{\leq} f_\ell(C^\star).
\end{eqnarray*}
Step $(a)$ follows by renaming $(U,Q)$ a $U$ and comparing the left hand with the lower bound characterization of Theorem~\ref{lower bound}. So we have a contradiction, and the proposition is proved.
\end{proof}

\section{The Gaussian MAC}
\label{exGauss}
The output of the Gaussian MAC is given by
$$Y=X_1+X_2+Z$$ where $Z\sim\mathcal{N}(0,1)$, and the transmitters have average power constraints $P_1,P_2$; i.e., $\frac{1}{n}\sum_{i=1}^n\mathbb{E}(X_{1,i}^2)\leq P_1$ and $\frac{1}{n}\sum_{i=1}^n\mathbb{E}(X_{2,i}^2)\leq P_2$. 

We  specialize Theorem \ref{thmupp}  to  obtain an upper bound on the achievable rate $R$. To simplify the bound, pick $U$ to be a noisy version of $Y$ in the form of $U=Y+Z_N$, where $Z_N$ is a Gaussian noise with zero mean and variance $N$ (to be optimized later).  Inequalities \eqref{thmupp1}-\eqref{thmupp5} are upper bounded as follows using maximum entropy lemmas:\allowdisplaybreaks
\begin{eqnarray}
&&\hspace{-1.25cm}R\leq C_1+C_2\label{gauss1}\\
&&\hspace{-1.25cm}R\leq C_2+\frac{1}{2}\log \left(1+P_1(1-\rho^2)\right)\\
&&\hspace{-1.25cm}R\leq C_1+\frac{1}{2}\log \left(1+P_2(1-\rho^2)\right)\\
&&\hspace{-1.25cm}R\leq \frac{1}{2}\log \left(1+P_1+P_2+2\rho\sqrt{P_1P_2}\right)\\
&&\hspace{-1.25cm}2R\stackrel{(a)}{\leq}\!\! \left(\!\!\!\!\begin{array}{l}C_1\!+\!C_2\!+\!\frac{1}{2}\log\!\left(\!{1\!+\!P_1\!+\!P_2\!+\!2\rho\sqrt{P_1P_2}}\right)\\+\frac{1}{2}\log\!\! \left(\!\frac{(1+N+P_1(1-\rho^2))(1+N+P_2(1-\rho^2))}{(1+N+P_1+P_2+2\rho\sqrt{P_1P_2})(1+N)}\!\right)\end{array}\!\!\!\!\right).\label{gauss5}
\end{eqnarray}
To obtain inequality $(a)$ above, write the RHS of \eqref{thmupp5} as 
\begin{eqnarray*}
\begin{array}{l}
C_1+C_2+h(Y|U)-h(YU|X_1X_2)+h(U|X_1)\\+h(U|X_2)-h(U|X_1X_2).
\end{array}
\end{eqnarray*} 
The negative terms are easy to calculate because of the Gaussian nature of the channel and the choice of $U$. The positive terms are bounded from above using the conditional version of the maximum entropy lemma \cite{Thomas87}. It remains to solve a max-min problem (max over $\rho$ and min over $N$). So the rate $R$ is achievable only if there exists some $\rho>0$ for which for every $N>0$ inequalities \eqref{gauss1}-\eqref{gauss5} hold. 

We choose $N$ to be (see \cite[eqn. (21)]{KangLiu11})
\begin{eqnarray}N=\left(\sqrt{P_1P_2}\left(\frac{1}{\rho}-\rho\right)-1\right)^+.\label{choiceofN}
\end{eqnarray} 
\begin{figure}[t!]
\begin{center}
\includegraphics[width=.48\textwidth]{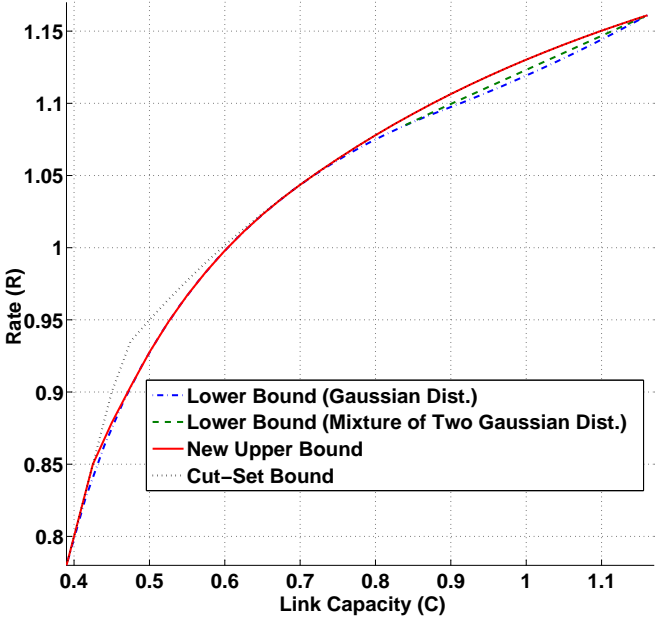}
\end{center}
\caption{Upper and lower bounds on $R$ as functions of $C$ for Gaussian MAC with $P_1=P_2=1$.}
\label{closeup}
\end{figure}

Let us first motivate this choice. It is easy to see that inequality \eqref{gauss5} is simply 
\begin{eqnarray}
I(X_1X_2;Y)-I(X_1;X_2)+I(X_1;X_2|U)
\end{eqnarray} 
evaluated for the joint Gaussian distribution $p(x_1,x_2)$ with covariance matrix 
\begin{eqnarray}
\left[\begin{array}{cc}P_1&\rho\sqrt{P_1P_2}\\\rho\sqrt{P_1P_2}&P_2\end{array}\right].
\end{eqnarray} 
The choice \eqref{choiceofN} makes $U$ satisfy the Markov chain $X_1-U-X_2$ for the regime where 
\begin{eqnarray}
\sqrt{P_1P_2}\left(\frac{1}{\rho}-\rho\right)-1\geq0
\end{eqnarray} and thus minimizes the RHS of \eqref{gauss5}.  Otherwise, $U=Y$ is chosen which results in a redundant bound.
The upper bound is summarized in Corollary \ref{uppGauss}.
\begin{corollary}
\label{uppGauss}
Rate $R$ is achievable only if there exists $\rho\geq0$ such that 
\begin{eqnarray*}
\begin{array}{lll}
&&\ \ \rho\leq \sqrt{1+\frac{1}{4P_1P_2}}-\sqrt{\frac{1}{4P_1P_2}},\\
&&\begin{array}{lll}R&\leq& C_1+C_2\\
R&\leq& C_2+\frac{1}{2}\log\left(1+P_1(1-\rho^2)\right)\\
R&\leq& C_1+\frac{1}{2}\log\left(1+P_2(1-\rho^2)\right)\\
R&\leq& \frac{1}{2}\log \left(1+P_1+P_2+2\rho\sqrt{P_1P_2}\right)\\
2R&\leq& C_1+C_2+\frac{1}{2}\log \left(1\!+\!P_1\!+\!P_2\!+\!2\rho\sqrt{P_1P_2}\right)\\&&-\frac{1}{2}\log\left(\frac{1}{1-\rho^2}\right),
\end{array}\\
&&\hspace{-.65cm}\text{or}\\
&&\ \  \sqrt{1+\frac{1}{4P_1P_2}}-\sqrt{\frac{1}{4P_1P_2}}\leq\rho\leq 1,\\
&&\begin{array}{lll}
R&\leq& C_1+C_2\\
R&\leq& C_2+\frac{1}{2}\log\left(1+P_1(1-\rho^2)\right)\\
R&\leq& C_1+\frac{1}{2}\log\left(1+P_2(1-\rho^2)\right)\\
R&\leq& \frac{1}{2}\log \left(1+P_1+P_2+2\rho\sqrt{P_1P_2}\right).
\end{array}
\end{array}
\end{eqnarray*}
\end{corollary}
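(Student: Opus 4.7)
The plan is to specialize Theorem~\ref{thmupp} to the Gaussian MAC by restricting the auxiliary channel to the one-parameter family $U = Y + Z_N$ with $Z_N \sim \mathcal{N}(0, N)$ independent of $(X_1, X_2, Z)$, and by parameterizing admissible $p(x_1, x_2)$ through the correlation coefficient $\rho = \mathrm{Cov}(X_1, X_2)/\sqrt{P_1 P_2}$; one may take $\rho \geq 0$ without loss of generality. Theorem~\ref{thmupp} permits any auxiliary channel, so for each admissible $(X_1, X_2)$ I am free to pick the $N \geq 0$ that best suits the argument for the corresponding $\rho$.

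First I dispatch \eqref{thmupp1}-\eqref{thmupp4}. These follow from the conditional maximum entropy lemma applied to $Y$ (possibly conditioned on $X_1$ or $X_2$), together with the second-moment inequality $E[\mathrm{Var}(X_i|X_j)] \leq P_i(1-\rho^2)$ coming from comparison with the linear MMSE. Next I bound \eqref{thmupp5}. Expanding the three mutual informations as differences of differential entropies yields two kinds of terms: the negative entropies $h(Y|X_1 X_2 U) = h(Z|Z+Z_N)$ and $h(U|X_1 X_2) = h(Z+Z_N)$ depend only on the independent Gaussian noises and are computed exactly, while the positive entropies $h(Y|U)$, $h(U|X_1)$, $h(U|X_2)$ are upper bounded via the conditional Gaussian max entropy lemma. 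Since the resulting linear-MMSE variances depend only on second-order statistics of $(X_1, X_2)$, the ensuing upper bound is a function purely of $\rho$ and $N$; assembling the pieces produces \eqref{gauss5} verbatim.

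The remaining task is to optimize $N \geq 0$. The upper bound in \eqref{gauss5} coincides with the exact value of the RHS of \eqref{thmupp5} when $(X_1, X_2)$ is jointly Gaussian with correlation $\rho$, so I work with that surrogate model. Using the equivalent form \eqref{equivthm5} and the fact that $I(X_1 X_2; Y U) = I(X_1 X_2; Y)$ (because $U$ is $Y$ plus independent noise), the only $N$-dependent term is $I(X_1; X_2|U)$, which is minimized at zero when $X_1 - U - X_2$ is Markov. In the Gaussian surrogate this Markov condition reduces to $\mathrm{Cov}(X_1, X_2|U) = 0$, and a one-line linear computation gives $N = \sqrt{P_1 P_2}(1/\rho - \rho) - 1$. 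This value is non-negative exactly when $\rho \leq \sqrt{1 + 1/(4P_1 P_2)} - \sqrt{1/(4P_1 P_2)}$, which is case (i); substituting it into \eqref{gauss5} collapses the ratio to $1-\rho^2$, yielding the claimed fifth bound. In case (ii) the unconstrained minimizer is negative, so I take $N = 0$: then $U = Y$, $I(X_1 X_2; Y|U) = 0$, and $I(X_i; U|X_j) = I(X_i; Y|X_j)$, so \eqref{thmupp5} reduces to the sum of \eqref{thmupp2} and \eqref{thmupp3} and becomes redundant. The main obstacle I anticipate is the algebra in case (i): verifying that $1 + N + P_i(1-\rho^2)$ factors cleanly and that $(1+N)(1 + N + P_1 + P_2 + 2\rho\sqrt{P_1 P_2})$ admits a matching factorization so that the overall ratio collapses to $1-\rho^2$ requires careful book-keeping, whereas every other step is either a standard max-entropy application or a closed-form Gaussian entropy computation.
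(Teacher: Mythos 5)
Your proposal is correct and follows essentially the same route as the paper: the same choice $U=Y+Z_N$, the same entropy decomposition of the right-hand side of \eqref{thmupp5} with exact computation of the noise-only terms and conditional max-entropy bounds on the rest, and the same selection $N=\bigl(\sqrt{P_1P_2}(1/\rho-\rho)-1\bigr)^+$ justified by forcing the Markov chain $X_1-U-X_2$ in the Gaussian surrogate (with $U=Y$ giving a redundant bound in the complementary regime). The factorization you flag as the main remaining chore does indeed collapse the ratio in \eqref{gauss5} to $1-\rho^2$, so nothing is missing.
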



For a lower bound, we use Theorem \ref{lower bound} and choose $(U,X_1,X_2)$ to be jointly Gaussian with mean $0$ and covariance matrix $\mathbf{K}_{UX_1X_2}$. 
Fig. \ref{closeup} shows the upper and lower bound as a function of $C$ for a symmetric network with $C_1=C_2=C$ and $P_1=P_2=1$. The dash-dotted curve in Fig. \ref{closeup} shows the rates achievable  using the scheme of Section \ref{secach} with a joint Gaussian distribution.  It is interesting to see that the obtained lower bound is not concave in $C$. This does not contradict Proposition \ref{thmconcave} because  Gaussian distributions are sub-optimal. 
The improved dashed curve shows rates that are achievable using a mixture of two Gaussian distributions.

For symmetric diamond networks where $C_1=C_2=C$ and $P_1=P_2=P$, we specify a regime of $C$ for which the lower and upper bounds meet and thus characterize the capacity. This is summarized in Theorem \ref{gaussmatch}. See \cite{reportMAC} for the proof.
\begin{theorem}
\label{gaussmatch}
For a symmetric Gaussian diamond network, the upper bound in Theorem \ref{thmupp} meets the lower bound if $C\leq \frac{1}{4}\log(1+2P)$, $C\geq \frac{1}{2}\log(1+4P)$, or
\begin{eqnarray}
\label{thmC}
\frac{1}{4}\log\frac{1+2P(1+\rho^{(1)})}{1-{\rho^{(1)}}^2}\leq C\leq \frac{1}{4}\log\frac{1+2P(1+\rho^{(2)})}{1-{\rho^{(2)}}^2}
\end{eqnarray}
where 
\begin{eqnarray}
&&\rho^{(1)}=\frac{-(1+2P)+\sqrt{12P^2+(1+2P)^2}}{6P}\label{r1}\\
&&\rho^{(2)}=\sqrt{1+\frac{1}{4P^2}}-\frac{1}{2P}.\label{r2}
\end{eqnarray}
\end{theorem}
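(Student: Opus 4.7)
The plan is to split the range of $C$ into the three pieces of the theorem and match Corollary~\ref{uppGauss} against Theorem~\ref{lower bound} in each.

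For the regime $C \leq \tfrac{1}{4}\log(1+2P)$ the cut-set \eqref{thmupp1} already gives $R \leq 2C$. To match it I would evaluate Theorem~\ref{lower bound} with $U$ trivial and $X_1, X_2$ independent zero-mean Gaussians of variance $P$ (correlation $\rho = 0$); under the hypothesis $2C \leq \tfrac{1}{2}\log(1+2P)$, each of the five right-hand sides in \eqref{inner} is at least $2C$, so $R = 2C$ is achievable. For the regime $C \geq \tfrac{1}{2}\log(1+4P)$ the cut-set \eqref{thmupp4} combined with the power constraint on $X_1+X_2$ (via Cauchy--Schwarz) and the Gaussian maximum-entropy lemma gives $R \leq \tfrac{1}{2}\log(1+4P)$. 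I match it by evaluating Theorem~\ref{lower bound} at $U = X_1 = X_2 \sim \mathcal{N}(0,P)$, which corresponds to sending a pure common message; the five right-hand sides of \eqref{inner} all reduce to either $R \leq C$ (satisfied by hypothesis) or $R \leq \tfrac{1}{2}\log(1+4P)$.

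For the intermediate regime I parametrize by $\rho \in [\rho^{(1)}, \rho^{(2)}]$ via $C = \tfrac{1}{4}\log\tfrac{1+2P(1+\rho)}{1-\rho^2}$ and claim the capacity equals $R^\star(\rho) := \tfrac{1}{2}\log(1+2P(1+\rho))$. For the upper bound I invoke Corollary~\ref{uppGauss} with jointly Gaussian $(X_1, X_2)$ of correlation $\rho$ and the choice \eqref{choiceofN} of $N$, which is non-negative precisely because $\rho \leq \rho^{(2)}$. A direct algebraic cancellation shows that at $C = C(\rho)$ both the cut-set \eqref{thmupp4} and the sum-rate bound \eqref{thmupp5} collapse to $R^\star(\rho)$; the slackness of \eqref{thmupp2}--\eqref{thmupp3} reduces to the identity $(P(1-\rho^2) - \rho)^2 \geq 0$, and \eqref{thmupp1} is slack because $2C - R^\star(\rho) = \tfrac{1}{2}\log\tfrac{1}{1-\rho^2} > 0$. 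For the matching achievability I use Theorem~\ref{lower bound} with $U$ trivial and the same jointly Gaussian $(X_1, X_2)$; a parallel calculation shows that the first, fourth, and fifth right-hand sides of \eqref{inner} all equal $R^\star(\rho)$ at this $C$, while the remaining two are slack by the same quadratic identity.

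The main obstacle is justifying the endpoints $\rho^{(1)}$ and $\rho^{(2)}$. Squaring \eqref{r1} yields $3P\rho^2 + (1+2P)\rho - P = 0$, which is exactly the critical-point equation $g'(\rho) = 0$ for $g(\rho) := (1+2P(1+\rho))(1-\rho^2)$, the argument of the logarithm in \eqref{thmupp5}; similarly \eqref{r2} gives $P\rho^2 + \rho - P = 0$, the boundary at which the optimal $N$ in \eqref{choiceofN} vanishes. For $\rho \in [\rho^{(1)}, \rho^{(2)}]$, $g$ is non-increasing and $N$ is non-negative, which makes $\rho_U = \rho$ a saddle point of the max--min in the upper bound: increasing $\rho_U$ lowers \eqref{thmupp5} since $g$ is decreasing, while decreasing $\rho_U$ lowers \eqref{thmupp4}. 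For $\rho < \rho^{(1)}$ the adversary can pick $\rho_U \in (\rho, \rho^{(1)}]$ on the increasing branch of $g$, raising both \eqref{thmupp4} and \eqref{thmupp5} simultaneously; for $\rho > \rho^{(2)}$ the sum-rate bound degenerates to \eqref{thmupp2}, and the adversary can raise $\rho_U$ until \eqref{thmupp2} and \eqref{thmupp4} balance above $R^\star(\rho)$. Making this saddle-point analysis rigorous --- and confirming that no alternative choice of $U$ in Theorem~\ref{lower bound} can close the gap outside $[\rho^{(1)}, \rho^{(2)}]$ --- is the step demanding the most care.
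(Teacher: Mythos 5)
Your proposal is correct, and since the paper defers this proof to \cite{reportMAC} it can only be checked against the structure of the stated constants --- which your argument reproduces exactly: $\rho^{(1)}$ in \eqref{r1} solves $3P\rho^2+(1+2P)\rho-P=0$, the stationarity condition for $g(\rho)=(1+2P(1+\rho))(1-\rho^2)$, $\rho^{(2)}$ in \eqref{r2} is the zero of $N$ in \eqref{choiceofN}, and the identity $(\rho-P(1-\rho^2))^2\geq 0$ is precisely what makes the individual-rate constraints slack on both sides. Two remarks on the step you flag as delicate. First, the converse in the middle regime is not really a saddle-point argument but a one-sided check: with $C=\tfrac14\log\tfrac{1+2P(1+\rho_0)}{1-\rho_0^2}$ and $R^\star(\rho_0)=C+\tfrac14\log g(\rho_0)$, you must verify that for \emph{every} correlation $\rho$ some constraint of Corollary~\ref{uppGauss} is at most $R^\star(\rho_0)$. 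For $\rho\leq\rho_0$ use \eqref{thmupp4}, which is increasing in $\rho$; for $\rho_0<\rho\leq\rho^{(2)}$ use \eqref{thmupp5}, i.e.\ $R\leq C+\tfrac14\log g(\rho)\leq C+\tfrac14\log g(\rho_0)$, since $g$ is concave and decreasing past $\rho^{(1)}\leq\rho_0$; for $\rho>\rho^{(2)}$, where \eqref{thmupp5} is absent from the corollary, use \eqref{thmupp2}: your same quadratic identity gives $\bigl(1+P(1-\rho^2)\bigr)^2<\bigl(1+P(1-(\rho^{(2)})^2)\bigr)^2=g(\rho^{(2)})\leq g(\rho_0)$, where the middle equality holds because $P\bigl(1-(\rho^{(2)})^2\bigr)=\rho^{(2)}$ is exactly the defining equation of $\rho^{(2)}$. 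This closes the only genuinely open case in your sketch. Second, your final worry --- confirming that no alternative choice of $U$ in Theorem~\ref{lower bound} closes the gap outside $[\rho^{(1)},\rho^{(2)}]$ --- is unnecessary: the theorem asserts only that the bounds meet \emph{if} $C$ lies in one of the stated ranges, so nothing need be proved about the complementary ranges (and indeed Fig.~\ref{closeup} shows the bounds do not meet there).
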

 For example, the upper and lower bounds match in Fig. \ref{closeup} (where $P=1$) for $C\leq 0.3962$, $0.4807\leq C \leq 0.6942$, and $C\geq1.1610$. 

\begin{remark}
There is a close connection between the bound of Corollary \ref{uppGauss} and that of \cite{KangLiu11}. 
The upper bound in \cite{KangLiu11} is tighter than Corollary \ref{uppGauss} in certain regimes of operation. For example in Fig. \ref{closeup}, the upper bound of \cite{KangLiu11} matches the lower bound for all $C\leq 0.6942$. The reason seems to be that the methods used in \cite{KangLiu11} provide a better single letterization of the upper bound for the Gaussian MAC. 
\end{remark}



\section{The binary Adder Channel}
\label{exAdd}
Consider the binary adder channel  defined by $\mathcal{X}_1=\{0,1\}$, $\mathcal{X}_2=\{0,1\}$, $\mathcal{Y}=\{0,1,2\}$, and $Y=X_1+X_2$. The best known upper bound for this channel is the cut-set bound. We  specialize Theorem \ref{thmupp} to derive a new upper bound on the achievable rate.  For simplicity, we assume $C_1=C_2=C$.

Define $p(u|x_1,x_2,y)$ to be a symmetric channel as shown in Fig. \ref{channelu}, with parameter $\alpha$ to be optimized. From Theorem~\ref{thmupp}, the rate $R$ is  achievable only if for some pmf $p(x_1,x_2)$ inequalities \eqref{thmupp1}-\eqref{thmupp5} hold for the aforementioned choice of $U$ and for every $\alpha$; i.e., we have to solve a max-min problem (max over $p(x_1,x_2)$, min over $\alpha$).
\begin{figure}[t!]
\centering
\begin{tikzpicture}[scale=.65]
\tikzstyle{every node}=[draw,shape=circle,font=\small];

\path (-.5,4.5) node[draw=none] () {$Y$};
\path (3.5,4.5) node[draw=none] () {$U$};

\coordinate (y0) at (0,0);
 \fill (y0) circle (4pt);
 \coordinate (y1) at (0,1.875);
 \fill (y1) circle (4pt);
 \coordinate (y2) at (0,3.75);
 \fill (y2) circle (4pt);

\path (-.5,0) node[draw=none]() {$2$};
\path (-.5,1.875) node[draw=none] () {$1$};
\path (-.5,3.75) node[draw=none] () {$0$};

 \coordinate (u0) at (3,1);
 \fill (u0) circle (4pt);
 \coordinate (u1) at (3,2.5);
 \fill (u1) circle (4pt);
 \path (3.5,1)  node[draw=none] () {$1$};
\path (3.5,2.5) node[draw=none] () {$0$};

\draw[->] (y0) --node[left,draw=none,xshift=.4cm,yshift=-.4cm]{$1-\alpha$} (u0);
\draw[->] (y0) --node[left,draw=none,xshift=0cm,yshift=-.4cm]{$\alpha$} (u1);
\draw[->] (y1) --node[left,draw=none,xshift=0cm,yshift=-0.15cm]{$\frac{1}{2}$} (u0);
\draw[->] (y1) --node[left,draw=none,xshift=0cm,yshift=0.15cm]{$\frac{1}{2}$} (u1);
\draw[->] (y2) --node[left,draw=none,xshift=.4cm,yshift=.4cm]{$1-\alpha$} (u1);
\draw[->] (y2) --node[left,draw=none,xshift=0cm,yshift=.45cm]{$\alpha$} (u0);

\end{tikzpicture}
\caption{Channel $p(u|x_1,x_2,y)$}
\label{channelu}
\end{figure}
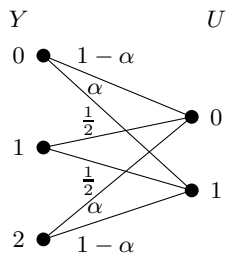

In the rest of this Section we manipulate the above bound to formulate a simplified upper bound. First, loosen the upper bound by looking at the min-max problem (rather than the max-min problem). Fix $\alpha$. For a fixed channel $p(u|x_1,x_2,y)$, the upper bound is concave in $p(x_1,x_2)$ (see Remark \ref{concave}). The concavity, together with the symmetry of the problem and the auxiliary channel imply the following lemma. 
\begin{lemma}
\label{lemmadoubly}
For a fixed $\alpha$, the optimizing pmf is that of a doubly symmetric binary source.
\end{lemma}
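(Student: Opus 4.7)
The plan is to use a standard symmetrization argument based on the concavity noted in Remark \ref{concave}. When $C_1 = C_2 = C$, the problem has two symmetries: a \emph{bit-flip} symmetry sending $(X_1, X_2, Y, U) \mapsto (1-X_1, 1-X_2, 2-Y, 1-U)$, and a \emph{user-swap} symmetry exchanging $X_1 \leftrightarrow X_2$. My first step is to verify that these symmetries are compatible with both the MAC and the auxiliary channel. For the MAC, $Y = X_1 + X_2$ is symmetric in its arguments, and $(1-x_1)+(1-x_2) = 2-(x_1+x_2)$, so both symmetries hold. For the auxiliary channel of Fig.~\ref{channelu}, $p(u|x_1,x_2,y)$ depends on $(x_1,x_2)$ only through $y$ (hence is trivially swap-symmetric) and satisfies $p(u \mid y) = p(1-u \mid 2-y)$ by direct inspection of the transition probabilities, including the $y=1$ case where both sides equal $\tfrac12$.

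Next, let $f(p_{X_1 X_2})$ denote the value of the upper bound of Theorem \ref{thmupp}, i.e., the minimum of the five right-hand sides \eqref{thmupp1}-\eqref{thmupp5}, evaluated at a given input pmf with the fixed channels. I would introduce the operators $T_1 p(x_1,x_2) := p(1-x_1,1-x_2)$ and $T_2 p(x_1,x_2) := p(x_2,x_1)$. Because mutual information is invariant under bijective relabeling of its arguments, the symmetries above imply $f(T_1 p) = f(p)$ and $f(T_2 p) = f(p)$ for every $p$; in the latter case, the bounds \eqref{thmupp2} and \eqref{thmupp3} merely swap (and $C_1 = C_2$), while the remaining bounds are already symmetric in $X_1, X_2$.

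Finally, I would set $\bar p := \tfrac{1}{4}(p + T_1 p + T_2 p + T_1 T_2 p)$. Concavity of $f$ in $p_{X_1 X_2}$ (Remark \ref{concave}), together with the invariances above, yields $f(\bar p) \geq \tfrac{1}{4}\bigl[f(p) + f(T_1 p) + f(T_2 p) + f(T_1 T_2 p)\bigr] = f(p)$. But $\bar p$ is fixed by both $T_1$ and $T_2$, which forces $\bar p(0,0) = \bar p(1,1)$ and $\bar p(0,1) = \bar p(1,0)$; hence $\bar p$ has uniform marginals and depends only on whether $x_1 = x_2$, i.e., it is the pmf of a doubly symmetric binary source. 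Since $\bar p$ achieves at least the value $f(p)$ for every $p$, the lemma follows.

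The step I expect to be most delicate is the invariance bookkeeping for \eqref{thmupp5} under the bit-flip relabeling, since it involves three separate mutual-information terms and the auxiliary variable $U$. However, once the symmetries $p(y|x_1,x_2) = p(2-y|1-x_1,1-x_2)$ and $p(u|x_1,x_2,y) = p(1-u|1-x_1,1-x_2,2-y)$ are in hand, the joint distribution of $(X_1,X_2,Y,U)$ induced by $T_1 p$ is literally the relabeling of the joint induced by $p$, so invariance of each mutual-information term is immediate.
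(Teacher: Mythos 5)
Your proof is correct and follows exactly the route the paper indicates: the concavity of the bound in $p(x_1,x_2)$ (Remark~\ref{concave}) combined with the bit-flip and user-swap symmetries of the adder MAC and the auxiliary channel, applied via averaging over the symmetry group to produce a no-worse invariant pmf, which is necessarily a doubly symmetric binary source. The symmetry verifications (in particular $p(u\mid y)=p(1-u\mid 2-y)$ for the channel of Fig.~\ref{channelu}) are the right details to check, and they hold.
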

Using Lemma \ref{lemmadoubly}, $p(x_1,x_2)$ may be assumed to be a doubly symmetric binary source with cross-over probability $p$. The bounds in \eqref{thmupp1}-\eqref{thmupp5} reduce to
\begin{eqnarray}
\begin{array}{lll}
R&\leq&2C\\
R&\leq&C\!+\!h_2(p)\\
R&\leq&h_2(p)\!+\!1\!-\!p\\
2R&\stackrel{(a)}{\leq}&2C\!+\!2h_2(p)\!-\!p+I(X_1;X_2|U).
\end{array}
\end{eqnarray}
The inequality in $(a)$ follows from \eqref{equivthm5} and the fact that $(X_1X_2)-Y-U$ forms a Markov chain.

To obtain the best bound, we choose $U$ such that $X_1-U-X_2$ forms a Markov chain; i.e., the optimal $\alpha$ satisfies
\begin{eqnarray}
\label{choicea}
\alpha(1-\alpha)=\left(\frac{p^\star}{2(1-p^\star)}\right)^2
\end{eqnarray}
where $p^\star$ is the $p$ which maximizes
\begin{eqnarray}
\min\left\{\!\!\!
\begin{array}{l}
2C,
C+h_2(p),
h_2(p)+1-p,
2C+2h_2(p)-p
\end{array}\!\!\!\right\}.
\end{eqnarray}
Since we have $p^\star\leq \frac{1}{2}$, the choice \eqref{choicea} is valid and we therefore obtain Corollary \ref{corbinaryadder}.
\begin{corollary}
\label{corbinaryadder}
Rate $R$ is achievable only if there exists some $p$, $0\leq p\leq\frac{1}{2}$, such that
\begin{eqnarray}
\begin{array}{lll}
R&\leq&2C\\
R&\leq&C\!+\!h_2(p)\\
R&\leq&h_2(p)\!+\!1\!-\!p\\
2R&\leq&2C\!+\!2h_2(p)\!-\!p.
\end{array}
\end{eqnarray}
\end{corollary}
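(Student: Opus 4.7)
My plan is to specialize Theorem~\ref{thmupp} to the binary adder MAC with $C_1 = C_2 = C$ and the symmetric auxiliary channel $p(u|x_1,x_2,y)$ of Fig.~\ref{channelu}, then collapse the max-min problem to the stated four inequalities by (i) restricting the input distribution via Lemma~\ref{lemmadoubly} and (ii) choosing the free parameter $\alpha$ so that the auxiliary mutual-information term in \eqref{thmupp5} vanishes.

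First I would apply Lemma~\ref{lemmadoubly} to assume $p(x_1,x_2)$ is a doubly symmetric binary source with crossover probability $p$, so that the marginals are uniform and $H(X_1|X_2) = h_2(p)$. The first four inequalities are then immediate: \eqref{thmupp1} gives $R \leq 2C$; by symmetry \eqref{thmupp2} and \eqref{thmupp3} both collapse to $R \leq C + h_2(p)$; and \eqref{thmupp4} reduces to $R \leq H(Y)$, which for $Y = X_1 + X_2$ with pmf $\bigl((1-p)/2,\,p,\,(1-p)/2\bigr)$ equals $h_2(p) + 1 - p$. For the fifth inequality I would use the equivalent form \eqref{equivthm5}: since the auxiliary channel depends on $(X_1,X_2)$ only through $Y$, the Markov chain $(X_1,X_2) - Y - U$ holds, so $I(X_1X_2;YU) = I(X_1X_2;Y) = h_2(p) + 1 - p$; combining with $I(X_1;X_2) = 1 - h_2(p)$ yields $2R \leq 2C + 2h_2(p) - p + I(X_1;X_2|U)$.

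It remains to minimize this bound over $\alpha$. Since $I(X_1;X_2|U) \geq 0$, the minimum is zero whenever $\alpha$ can be chosen so that $X_1 - U - X_2$ forms a Markov chain. Here lies the central calculation: writing $p(u,x_1,x_2) = p(u|y)p(x_1,x_2)$ and exploiting the joint symmetry of input and auxiliary channel, the factorization condition $p(x_1,x_2|u)=p(x_1|u)p(x_2|u)$ reduces to the single identity $(1-\alpha)(1-p) = \bigl[(1-\alpha)(1-p)+p/2\bigr]^2$; setting $t = (1-\alpha)(1-p) + p/2$ this becomes the quadratic $t^2 - t + p/2 = 0$, whose solutions, substituted back, yield $\alpha(1-\alpha) = \bigl(p/(2(1-p))\bigr)^2$, i.e.\ \eqref{choicea}. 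The main obstacle, and the reason for swapping max-min for min-max at the outset, is that this $\alpha$ depends on $p$; but since $\alpha(1-\alpha) \leq 1/4$ iff $p \leq 1/2$, and the maximizer $p^\star$ of the resulting min of four functions satisfies $p^\star \leq 1/2$ (each of the four bounds is non-increasing in $p$ on $[1/2,1]$), the choice is valid, $I(X_1;X_2|U)=0$, and the fifth bound becomes $2R \leq 2C + 2h_2(p) - p$, completing the proof of Corollary~\ref{corbinaryadder}.
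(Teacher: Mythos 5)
Your proposal follows essentially the same route as the paper: relax max-min to min-max, invoke Lemma~\ref{lemmadoubly} to restrict to a doubly symmetric binary source, use \eqref{equivthm5} with the Markov chain $(X_1X_2)-Y-U$ to get $2R\leq 2C+2h_2(p)-p+I(X_1;X_2|U)$, and then pick $\alpha$ via \eqref{choicea} so that $X_1-U-X_2$ is Markov at $p^\star$. Your version is correct and in fact supplies details the paper leaves implicit, namely the derivation of the quadratic behind \eqref{choicea} and the monotonicity argument showing $p^\star\leq\tfrac{1}{2}$.
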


Fig. \ref{plotbinaryadder} plots this bound (Corollary \ref{corbinaryadder}) and compares it with the cut-set bound and the lower bound of Theorem \ref{lower bound} for different values of $C$. It turns out that the lower and upper bounds meet for $C\leq .75$ and $C\geq .7929$. See \cite{reportMAC} for the proof.
\begin{figure}[t!]
\begin{center}
\includegraphics[width=.42\textwidth]{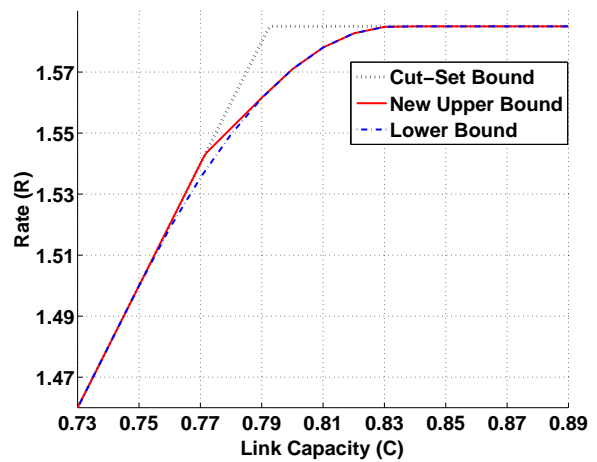}
\end{center}
\caption{Lower and upper bounds on $R$ as functions of $C$ for the symmetric binary adder MAC.}
\label{plotbinaryadder}
\end{figure}
\section{Acknowledgment}
The work of S. Saeedi Bidokhti was supported by the Swiss National Science Foundation fellowship no. 146617. The work of G. Kramer was supported by an Alexander von Humboldt Professorship endowed by the German Ministry of Education and Research.
\bibliographystyle{IEEEtran}
\bibliography{bibliographyMAC}

%
%
%

\end{document}